\documentclass[a4paper,twocolumn,11pt]{quantumarticle}
\pdfoutput=1
\usepackage{amssymb}
\usepackage{amsthm}
\usepackage{mathtools}
\usepackage{dsfont}
\usepackage{graphicx}
\bibliographystyle{unsrt}


\usepackage{hyperref}
\hypersetup{
colorlinks=true,
linkcolor=red,
urlcolor=magenta,
citecolor=blue
}


\newcommand{\tr}{{\rm tr}\,}
\newcommand{\ket}[1]{\left|{#1}\right\rangle}
\newcommand{\bra}[1]{\left\langle{#1}\right|}

\newcommand{\ketbra}[2]{\left|{#1}\rangle\!\langle{#2}\right|}



%
%
\DeclareMathOperator{\Tr}{Tr}

\newtheorem{theorem}{Theorem}

\begin{document}

\title{Quantum Stabilizer Channel for Thermalization}

\author{Esteban Mart\'inez Vargas}
\affiliation{F\'isica Te\`orica: Informaci\'o i Fen\`omens Qu\`antics, Departament de F\'isica, Universitat Aut\`onoma de Barcelona, 08193 Bellatera (Barcelona) Spain}
\email{estebanmv@protonmail.com}

\begin{abstract}
    We study the problem of quantum thermalization from a very recent perspective: 
    via discrete interactions with thermalized systems. We thus extend the previously introduced
    scattering thermalization program by studying not only a specific channel but 
    allowing any possible one.
    We find a channel that solves a fixed point condition using the Choi matrix
    approach that is in general non-trace-preserving. 
    We also find a general way to complement the found channel so that it becomes
    trace-preserving. Therefore we find a general way of characterizing a family of channels
    with the same desired fixed point.
    From a quantum computing perspective, the
    results thus obtained can be interpreted as a condition for quantum error correction that
    also reminds of quantum error avoiding.
\end{abstract}

\maketitle
\section{Introduction}
Thermalization is an ubiquitous phenomenon in the universe. Just as ubiquitous is the 
applicability of quantum physics. This implies that a concept of quantum thermalization
should exist. It would mean that there must be a process where a physical system thermalizes
even when quantum physics is the most accurate description of it. This field of study
enters the wider subject of quantum thermodynamics \cite{strasberg2022quantum}.
However, where is the limit of applicability of the term thermalization?
When does the description as an aggregate of quantum particles should be abandoned so that
a collection of classical particles make more sense? 

There has been work in this direction, specifically, taking into account terminology
from quantum information theory 
\cite{EntanglementAnPopesc2006,ThermalizationRiera2012}. Their approach involves
having a Hamiltonian that dictates the evolution of a quantum state and 
observing when the system is close to a thermal state. Using general arguments of
typicality \cite{CoverThomasElements2006} one can find conditions where 
thermalization occurs and why assumptions of statistical mechanics arise naturally \cite{EntanglementAnPopesc2006}.
This line of thought can be called the program of dynamical typicality \cite{ThermalizationRiera2012}.

Not all systems fall into this scheme for thermalization.
A situation that is not strictly within the scope of the dynamical
typicality program 
is presented in the study of Jacob et. al. \cite{ThermalizationJacob2021,MicroReversibiEhrich2020} 
where thermalization in quantum systems is reached through a scattering process.
They consider a reservoir that contains particles thermalized to a specific temperature. This reservoir shoots particles
spaced through time intervals into a central system that scatters them as shown in Fig. (\ref{fig:shtnbth}). Many questions can be 
asked in this setting about the thermalization process of the scatterer. 

\begin{figure}[h]
    \center
    \includegraphics[scale=2]{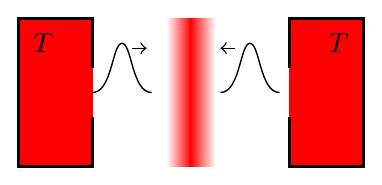}
    \caption{Two baths at temperature $T$ ``shooting'' particles to central system.}
    \label{fig:shtnbth}
\end{figure}

This approach differs from the dynamical typicality program in the sense that
the interactions with the baths and the physical system in consideration do
not happen in a continuous manner. We have in change, a series of instantaneous
collisions or interactions that modify the system in consideration.
It could be called a ``scattering thermalization program'' in juxtaposition to 
the aforementioned approach. 

We now describe the scattering thermalization program. 
We will call the central system $Y$ of Fig. (\ref{fig:shtnbth}).
We will call $X$ the thermalized particles from two boxes on the sides which ``shoot''
particles into system $Y$.
We have a Hamiltonian that encodes the evolution of the
independent subsystems and the interaction between them. Given by
\begin{equation}
    H=H_0+V,
\end{equation}
where $H_0=p^2/2m\otimes\mathds{I}_Y+\mathds{I}_Y\otimes H_Y$
is the sum of the kinetic energy of $X$ and the internal energy
of the system $Y$.
We can therefore define the unitaries $U_0(t)=\text{exp}[-itH_0/\hbar]$ and 
$U(t)=\text{exp}[-itH/\hbar]$. We can define the isometric M\o ller operators~\cite{taylor2006scattering}
\begin{equation}
    \Omega_{\pm}=\lim_{t\rightarrow\mp\infty}U^\dagger(t)U_0(t).
\end{equation}
We define the scattering operator as
\begin{equation}
    S=\Omega^\dagger_{-}\Omega_{+},
\end{equation}
which is unitary, $SS^\dagger=S^\dagger S=\mathds{I}$.
Defining the initial state of both $X$ and $Y$ as $\rho_X\otimes\rho_Y$
we have the following map
\begin{equation}
    \mathds{S}\rho_Y = \Tr_{X}[S(\rho_X\otimes\rho_Y)S^\dagger].
    \label{eq:rhodinamics}
\end{equation}
Once the interaction $V$ is specified, the Hamiltonian $H$ implies a unitary evolution with the
interaction and then without it. 
If we define the map $\varepsilon_t(\cdot)=\text{exp}[-itH_Y/\hbar](\cdot)\text{exp}[itH_Y/\hbar]$
then we can write the dynamics at step $n$ as
\begin{equation}
    \rho_Y^{(n)}=\varepsilon_{\tau_n}\circ\mathds{S}\circ\ldots\varepsilon_{\tau_2}\circ\mathds{S}\circ\varepsilon_{\tau_1}\circ\mathds{S}\rho_Y^{(0)}.
\end{equation}
From the analysis of a scattering map that acts on mixed states we 
can divide the wave packets into those that are broad in the momentum variance and
those who are narrow.
One of the main results from Ref. \cite{ThermalizationJacob2021} 
is that a necessary condition for thermalization in the scattering scenario
is that the wave packets have to be narrow.
Another central condition is microscopic reversibility, which is the reason 
that there are two baths shooting into a central system.
%

Summarizing, we have two conditions for thermalization
when considering quantum particles in this scheme:
\begin{itemize}
    \item Narrow wave packets.
    \item Microscopic reversibility.
\end{itemize}

Now, a question arises, why is condition 1 necessary? If one is presented with 
wide wave packets, as would require a 
natural quantum-mechanical description, thermalization should occur, as it is
an ubiquitous phenomenon in the universe. That is, thermalization should be 
independent from the wavelength. 

Notice however that the scattering thermalization program uses a specific channel
(see equation (\ref{eq:rhodinamics})) we
call it the scattering channel. The approach of Jacob et. al. can be seen as an investigation
of thermalization 
in discrete applications of this channel.
Here, we extend the scattering thermalization program to consider any 
possible channel. We fix the desired final state and
ask for a channel that reaches such a state. In this sense, our approach
is dual to Jacob et. al.. 
%
\section{Iterations of a quantum channel}

In our problem we know the state the source is producing, what we don't know is the initial target state.
We ask for a channel that through iterations on a system would change it to a desired state 
regardless of the initial one. The mathematical problem can be stated explicitly:
find a quantum channel that produces the desired thermal state starting
from another given state after many iterations, in other words, find a
quantum channel with a desired fixed point.
This topic is thus related to the stabilizer formalism used to prove the Knill-Gottesman theorem
\cite{NielsenChuangQuantum2011,ImprovedSimulaAarons2004}. 

Suppose we are given a thermal state that depends on a Hamiltonian $H$ as $\rho_{th}[H]$, it is
then relevant to ask if there exists a (nontrivial) channel $\Phi$ such that it is the \emph{stabilizer
    channel} of $\rho_{th}[H]$ in the sense that
\begin{equation}
    \Phi(\rho_{th}[H])=\rho_{th}[H].
    \label{eq:ChanProy}
\end{equation}

The Choi representation of channels \cite{WatrousTheTheoryof2018} will be useful for addressing the existence
of a nontrivial channel that fulfills equation (\ref{eq:ChanProy}).
Then, the simplest nontrivial channel $\Phi$ that fulfills equation (\ref{eq:ChanProy}) is obtained
as a solution to the semidefinite program (SDP) \cite{boyd2004convex,WatrousTheTheoryof2018,ASemidefiniteEldar2003}

\begin{equation}
    \begin{aligned}
        & \underset{Z}{\text{minimize}}
        & & \Tr[Z] \\
        & \text{subject to}
        & &  \tr_{\mathcal{H}_2}[Z(\mathds{1}_{\mathcal{H}_1}\otimes \sigma^{\intercal})]\geq\sigma\\
        & & & Z\geq 0.
    \end{aligned}
    \label{eq:sdp}
\end{equation}
Observe that the solution of the SDP $Z$ corresponds to the Choi matrix of the channel 
$\Phi$. The minimization
of the trace of $Z$ is relevant, as it eliminates any (unnecessary) 
orthogonal element. This means, suppose that $K=R+Z$ fulfills the conditions of the
SDP and $Z$ is the optimal solution, the minimization assures that 
\begin{equation}
    \tr_{\mathcal{H}_2}[R(\mathds{1}_{\mathcal{H}_1}\otimes \rho_{th}[H]^{\intercal})]=0
\end{equation}
for $R\geq0$.
If $\sigma=\rho_{th}[H]$ in the SDP (\ref{eq:sdp}) then we have solved our problem
at hand.
Fortunately, this SDP can be explicitly solved.
\begin{theorem}
    \label{thm:sdpteo}
    The SDP (\ref{eq:sdp}) has the solution $1/\lambda_{max}$ where $\lambda_{max}$ is
    the maximum eigenvalue of $\sigma$.
\end{theorem}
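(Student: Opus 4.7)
The plan is to bound the objective $\Tr[Z]$ from below and from above by the same value $1/\lambda_{\max}$, and conclude that this common value is the optimum.

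\emph{Lower bound.} I would take the trace of the feasibility inequality $\tr_{\mathcal{H}_2}[Z(\mathds{1}\otimes\sigma^{\intercal})]\geq \sigma$, obtaining
\[ \Tr[Z(\mathds{1}\otimes\sigma^{\intercal})]\geq \Tr\sigma = 1. \]
Because $Z\geq 0$ and $\mathds{1}\otimes\sigma^{\intercal}$ is positive semidefinite with operator norm equal to $\lambda_{\max}$, the elementary inequality $\Tr[AB]\leq \|B\|_\infty\Tr[A]$ for $A,B\geq 0$ yields $\Tr[Z(\mathds{1}\otimes\sigma^{\intercal})] \leq \lambda_{\max}\Tr[Z]$. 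Chaining the two gives $\Tr[Z]\geq 1/\lambda_{\max}$ for every primal-feasible $Z$.

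\emph{Upper bound.} I would exhibit an explicit feasible $Z$ attaining this value. A natural candidate is
\[ Z_\star \;=\; \frac{\sigma}{\lambda_{\max}}\otimes|\psi^\ast\rangle\langle\psi^\ast|, \]
where $|\psi\rangle$ is a unit eigenvector of $\sigma$ with eigenvalue $\lambda_{\max}$, so that $\sigma^{\intercal}|\psi^\ast\rangle = \lambda_{\max}|\psi^\ast\rangle$. Positivity is immediate, the trace evaluates to $1/\lambda_{\max}$, and a direct calculation gives
\[ \tr_{\mathcal{H}_2}[Z_\star(\mathds{1}\otimes\sigma^{\intercal})] = \frac{\sigma}{\lambda_{\max}}\,\langle\psi^\ast|\sigma^{\intercal}|\psi^\ast\rangle = \sigma,\]
so the feasibility constraint is satisfied with equality. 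Matching upper and lower bounds then yields $\Tr[Z_\star] = 1/\lambda_{\max}$ as the optimum.

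I do not anticipate a serious obstacle; the one point to handle carefully is the case where $\lambda_{\max}$ is degenerate. Then $|\psi^\ast\rangle\langle\psi^\ast|$ should be replaced by any normalized state supported on the $\lambda_{\max}$-eigenspace of $\sigma^{\intercal}$ (for instance the normalized projector onto that eigenspace), and the saturation computation above carries over verbatim. The conceptual core of the argument is the observation that only the top of the spectrum of $\sigma$ enters the bound, because in the relevant trace inequality $\mathds{1}\otimes\sigma^{\intercal}$ couples to $Z$ exclusively through its operator norm; this is what makes both the lower bound tight and the explicit minimizer easy to guess.
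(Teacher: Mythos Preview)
Your proof is correct. The paper takes a slightly different route: it writes down the dual SDP explicitly, exhibits the same primal feasible point you found together with the dual feasible point $W_\sigma=\lambda_{\max}^{-1}\mathds{1}$, observes that both give the value $1/\lambda_{\max}$, and concludes via strong duality (Slater's theorem). Your lower-bound argument is in fact weak duality in disguise---the operator inequality $\mathds{1}\otimes\sigma^{\intercal}\leq \lambda_{\max}\mathds{1}$ that underlies $\Tr[AB]\leq\|B\|_\infty\Tr[A]$ is precisely dual feasibility of $W_\sigma$---but you bypass the SDP duality machinery entirely, which makes the argument more self-contained and elementary. The paper's version, by contrast, situates the result within the standard SDP framework and makes the dual variable explicit, which is natural given that the problem was posed as an SDP.
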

\begin{proof}
    We build the dual program to (\ref{eq:sdp}),
\begin{equation}
    \begin{aligned}
        & \underset{W}{\text{maximize}}
        & & \Tr[W\sigma] \\
        & \text{subject to}
        & &  W^\intercal\otimes\sigma^\intercal\leq\mathds{1}\\
        & & & W\geq 0.
    \end{aligned}
    \label{eq:sdpdual}
\end{equation}
Let 
\begin{equation}
    \sigma=\sum_i\lambda_i\ketbra{v_i}{v_i}
\end{equation}
be the spectral decomposition of $\sigma$.
We define
\begin{equation}
    Z_\sigma:=\frac{\sigma\otimes(\ketbra{v_{max}}{v_{max}})^\intercal~}{\lambda_{max}},\quad\quad W_\sigma:=\frac{1}{\lambda_{max}}\mathds{1}.
\end{equation}
where $\ket{v_{max}}$ is the correspondent eigenvector for $\lambda_{max}$.
Observe that $Z_\sigma$ and $W_\sigma$ belong in the primal and dual feasible sets of their respective programs. They
also yield the same value for the figure of merit, so strong duality is always fulfilled. The value is thus
the optimal one because of Slater's theorem for semidefinite programs~\cite{WatrousTheTheoryof2018}. 
\end{proof}

The optimal solution channel $Z_\sigma$ is not trace-preserving,
observe that an arbitrary state $\rho$ would result in
\begin{equation}
    \Phi(\rho)= \frac{\bra{v_{max}}\rho\ket{v_{max}}~}{\lambda_{max}}\sigma.
    \label{eq:chanlambdamax}
\end{equation}
The channel is trace-preserving only when 
\begin{equation}
\bra{v_{max}}\rho\ket{v_{max}}=\lambda_{max}.
\label{eq:imprel}
\end{equation}
Also, notice that the solution $Z_\sigma$ fulfills the condition of
the stabilizer channel exactly, not only solving the SDP (\ref{eq:sdp}).

We can define the operators $A_{ij}$ as
\begin{equation}
    A_i:=\sqrt{\frac{\lambda_i}{\lambda_{max}}~}\ketbra{v_i}{v_{max}}.
\end{equation}
Observe that $A_{ij}$ has two sub-indices and can therefore be ordered.
We thus define the operator 
\begin{equation}
    A := 
\begin{pmatrix}
    A_1 & 0 &\ldots & 0 \\
    0 & A_2 & \ldots & 0 \\
    \vdots & \vdots  & \ddots & \vdots\\
    0 & 0 & \ldots & A_d
\end{pmatrix}
\end{equation}
This operator acts on the state with an ancilla with a Hilbert space
$\mathcal{Z}$ of dimension $d$,
\begin{equation}
    \Phi(\sigma)=\tr_{\mathcal{Z}}[A(\mathds{1}_{d}\otimes \sigma) A^\dagger].
\end{equation}

The actual implementation of the channel (\ref{eq:chanlambdamax}) requires a quantification
of the resources needed. Such a question corresponds to resource theories \cite{UsingAndReusiDiaz2018,ResourceTheoryAlbare2018}.

However, we can study the cases where the equality (\ref{eq:imprel}) is fulfilled. 
Observe that this happens when $\rho=U_{v_{max}}DU_{v_{max}}^\dagger$ and 
\begin{equation}
    D = 
\begin{pmatrix}
    \lambda_{max} & 0 \\
    0 & (1-\lambda_{max})\Lambda_{d-1} 
\end{pmatrix}
\label{eq:conds}
\end{equation}
$\lambda_{max}\geq1/2$ w.l.g. and $\Lambda_{d-1}$ is a $(d-1)\times(d-1)$ positive diagonal matrix with $\tr[\Lambda_{d-1}]=1$.
$U_{v_{max}}$ is a unitary matrix of dimension $d$ which includes $\ket{v_{max}}$
as a column that corresponds to $\lambda_{max}$. These states denote the set of states 
that are stabilized by channel (\ref{eq:chanlambdamax}) without loss.

For the case of one qubit the freedom that equation (\ref{eq:chanlambdamax})
allows is null: specifying one eigenvalue specifies the second one,
also specifying a vector specifies its orthogonal one (on the opposite side
of Bloch's sphere). Imagine now that we have two qubits, the effective dimension
of the Hilbert space is four. The freedom here is much more interesting,
we can specify the eigenvalue matrix following equation (\ref{eq:conds}) and
the $U_{max}$ can be specified as follows
\begin{equation}
    U_{max} = 
\begin{pmatrix}
    1 & 0 & 0 & 0 \\
    0 & v^1_1 & v^2_1 & v^3_1 \\
    0 & v^1_2 & v^2_2 & v^3_2 \\
    0 & v^1_3 & v^2_3 & v^3_3 
\end{pmatrix},
\end{equation}
where $v^i_j$ represents the $j$th entry of the $i$th eigenvector.
Generalizing, $n$ qubits imply that there is a Hilbert space of dimension $2n-1$
where we can choose freely a state which represents the allowed errors.

Remember that the SDP (\ref{eq:sdp}) required the minimum possible expression 
for the channel in question, there is still the necessity of 
characterizing the whole channel. Observe that in general there is
an infinite number of channels corresponding to a single fixed point.
We have to observe the remaining part of the channel so that it
becomes a trace-preserving one. This characterization is done
in the following theorem.
\begin{theorem}
    \label{thm:decomposition}
    Given a state $\sigma$ we can describe a trace-preserving separable family of channels with fixed point $\sigma$
    in terms of its Choi matrix $\mathcal{C}$ as follows
    \begin{align}
        \mathcal{C}[\sigma,B]&=\sigma\otimes\frac{(\ketbra{V_{max}}{V_{max}})^\intercal}{\lambda_{max}}\nonumber\\
        &+B\otimes(\mathds{I}-\frac{(\ketbra{V_{max}}{V_{max}})^\intercal}{\lambda_{max}}),
        \label{eq:chandecomp}
    \end{align}
    $\lambda_{max}$ is the maximum eigenvalue of $\sigma$ and $\ket{V_{max}}$ its correspondent eigenvector.
    $B$ is a state. This description is valid for $\bra{V_{max}}B\ket{V_{max}}\leq\lambda_{max}$ and
    any input state $\bra{V_{max}}\rho\ket{V_{max}}\leq\lambda_{max}$.
\end{theorem}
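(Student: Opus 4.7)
The plan is to verify, directly from the Choi-matrix representation used in (\ref{eq:sdp}), the three substantive claims embedded in the statement: that $\mathcal{C}[\sigma,B]$ is trace-preserving, that the associated map leaves $\sigma$ invariant, and that the image of every admissible input is a legitimate density matrix. All three reduce to short algebraic checks once one passes from the Choi operator to its map via $\Phi(\rho) = \tr_{\mathcal{H}_2}[\mathcal{C}[\sigma,B](\mathds{1}_{\mathcal{H}_1}\otimes\rho^{\intercal})]$, the convention fixed by (\ref{eq:sdp}).

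First I would compute $\tr_{\mathcal{H}_1}[\mathcal{C}[\sigma,B]]$. Since $\tr[\sigma]=\tr[B]=1$, linearity of the partial trace collapses the two summands to $(\ketbra{V_{max}}{V_{max}})^{\intercal}/\lambda_{max} + (\mathds{I} - (\ketbra{V_{max}}{V_{max}})^{\intercal}/\lambda_{max}) = \mathds{I}_{\mathcal{H}_2}$, which is exactly the Choi criterion for trace preservation in this convention. Next, I would evaluate $\Phi$ on a generic input: using $\tr[(\ketbra{V_{max}}{V_{max}})^{\intercal}\rho^{\intercal}] = \bra{V_{max}}\rho\ket{V_{max}}$ together with $\tr[\rho^{\intercal}]=1$, one obtains $\Phi(\rho) = \alpha(\rho)\,\sigma + (1-\alpha(\rho))\,B$ with $\alpha(\rho) = \bra{V_{max}}\rho\ket{V_{max}}/\lambda_{max}$. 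Specialising to $\rho=\sigma$ gives $\alpha(\sigma)=1$, hence $\Phi(\sigma)=\sigma$, which is the fixed-point property.

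With $\Phi(\rho)$ written as a convex combination of $\sigma$ and $B$, both hypotheses acquire a concrete role. The input condition $\bra{V_{max}}\rho\ket{V_{max}}\leq\lambda_{max}$ is precisely what places $\alpha(\rho)$ in $[0,1]$ and hence makes $\Phi(\rho)$ a bona fide state. The complementary condition $\bra{V_{max}}B\ket{V_{max}}\leq\lambda_{max}$ ensures the admissible set is invariant under $\Phi$: indeed $\bra{V_{max}}\Phi(\rho)\ket{V_{max}}$ is a convex combination of $\lambda_{max}$ and $\bra{V_{max}}B\ket{V_{max}}$, both bounded by $\lambda_{max}$, so iteration of $\Phi$ within this family is self-consistent. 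The separable structure is manifest from the tensor-product form of the two summands, so nothing further is required there.

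The step I expect to be the main obstacle is the positivity question. Whenever $\lambda_{max}<1$ the operator $\mathds{I} - (\ketbra{V_{max}}{V_{max}})^{\intercal}/\lambda_{max}$ has a strictly negative eigenvalue, so $\mathcal{C}[\sigma,B]$ is not globally positive and cannot be the Choi matrix of a completely positive map in the standard sense. The proof therefore cannot invoke a global Choi-positivity argument; instead, positivity of $\Phi(\rho)$ must be argued pointwise on the admissible convex set cut out by $\bra{V_{max}}\rho\ket{V_{max}}\leq\lambda_{max}$, which is exactly what the convex-combination representation above delivers. Articulating that the theorem describes a \emph{channel} only relative to this restricted invariant domain, and checking that the two stated inequalities are tight for this purpose, is the conceptual core of the argument.
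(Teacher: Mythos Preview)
Your argument is correct and takes a genuinely different route from the paper. The paper proceeds constructively: it starts from the minimal operator $P_\sigma$ supplied by Theorem~\ref{thm:sdpteo}, postulates an additive correction $B'$, and then derives from the fixed-point constraint $\tr_{\mathcal{H}_2}[B'(\mathds{I}\otimes\sigma^\intercal)]=0$ and the trace-preservation constraint $\tr_{\mathcal{H}_1}[\mathcal{C}]=\mathds{I}$ that $B'$ must have the tensor form $B\otimes(\mathds{I}-(\ketbra{V_{max}}{V_{max}})^\intercal/\lambda_{max})$ with $\tr[B]=1$. You instead take the formula for $\mathcal{C}[\sigma,B]$ as given and verify trace preservation, the fixed point, and output validity by direct computation of $\Phi(\rho)=\alpha(\rho)\sigma+(1-\alpha(\rho))B$. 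Your verification is more elementary and self-contained (it never invokes Theorem~\ref{thm:sdpteo}); the paper's derivation, on the other hand, explains \emph{why} the decomposition has this particular shape.

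You also identify a point the paper's proof does not address: whenever $\lambda_{max}<1$ the factor $\mathds{I}-(\ketbra{V_{max}}{V_{max}})^\intercal/\lambda_{max}$ is indefinite, so $\mathcal{C}[\sigma,B]$ is not a positive semidefinite Choi matrix and the map is not completely positive in the standard sense. Your resolution---arguing positivity of $\Phi(\rho)$ only pointwise on the restricted convex domain $\bra{V_{max}}\rho\ket{V_{max}}\leq\lambda_{max}$ via the convex-combination representation, and showing that this domain is $\Phi$-invariant under the hypothesis $\bra{V_{max}}B\ket{V_{max}}\leq\lambda_{max}$---is exactly what is needed to make the statement coherent, and it gives the two inequality hypotheses an operational role that the paper's own proof leaves implicit.
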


Equation (\ref{eq:chandecomp}) implies that there is a relevant positive semidefinite part of
a channel is given by the operator
\begin{equation}
    P_\sigma\equiv \sigma\otimes\frac{(\ketbra{V_{max}}{V_{max}})^\intercal}{\lambda_{max}}.
\end{equation}

\begin{proof}
    Because of theorem \ref{thm:sdpteo}
    the operator with minimum trace with fixed point $\sigma$ has to be $P_\sigma$. 
    In the case that $\bra{V_{max}}\rho\ket{V_{max}}<\lambda_{max}$ the channel $P_\sigma$
    would be non-trace-preserving. To have trace preservation we need to add an additional
    operator.
    We consider an operator $B^\prime$ such that the Choi matrix of $\Phi$ is
    \begin{equation}
        \mathcal{C}=P_\sigma+B^\prime.
    \end{equation}
    The fixed point condition $\Phi[\sigma]=\sigma$ requires that
    \begin{equation}
        \tr_{\mathcal{H}_2}[B^\prime(\mathds{I}\otimes\sigma^\intercal)]=0.
    \end{equation}
    For trace preservation, we have that for any state $\rho$,
    \begin{equation}
        \tr[B^\prime(\mathds{I}\otimes\rho^\intercal)]=1-\frac{\bra{V_{max}}\rho\ket{V_{max}}~}{\lambda_{max}}.
    \end{equation}
    Also, by theorem 2.26 of \cite{WatrousTheTheoryof2018}, for trace-preserving maps
    \begin{equation}
        \tr_{\mathcal{H}_1}[C]=\mathds{I}.
    \end{equation}
    This is achieved by an operator of the form
    \begin{equation}
        B^\prime=B\otimes(\mathds{I}-\frac{\ketbra{V_{max}}{V_{max}~}}{\lambda_{max}})
        \label{eq:Bdecomp}
    \end{equation}
    with $\tr[B]=1$. This defines a family of channels, explicitly, those that can be written as
    in Eq. (\ref{eq:Bdecomp}).
\end{proof}
    Observe that the resulting operator yields the channel acting one and two
    times on an operator $\rho$ as follows
    \begin{align}
        \Phi[\rho]&=\sigma\frac{\bra{V_{max}}\rho\ket{V_{max}~}}{\lambda_{max}}+\nonumber\\
        &B(1-\frac{\bra{V_{max}}\rho\ket{V_{max}~}}{\lambda_{max}}),\nonumber\\
        \Phi^2[\rho]&=\sigma\frac{\bra{V_{max}}\rho\ket{V_{max}~}}{\lambda_{max}}\\
        &+\sigma\frac{\bra{V_{max}}B\ket{V_{max}~}}{\lambda_{max}}(1-\frac{\bra{V_{max}}\rho\ket{V_{max}~}}{\lambda_{max}})\nonumber\\
        &+B(1-\frac{\bra{V_{max}}B\ket{V_{max}~}}{\lambda_{max}})\times\nonumber\\
        &(1-\frac{\bra{V_{max}}\rho\ket{V_{max}~}}{\lambda_{max}}).
    \end{align}
    We observe that the proportion of $\sigma$ grows with each iteration and the proportion of $B$ decreases.
    As we asked for trace-preserving channels then $\Phi^n[\rho]\rightarrow\sigma$ for $n\rightarrow\infty$.
\subsection{Jacob et. al. example}
    For example, in Jacob et. al. problem they consider a qubit state as the central one, remember from Fig. (\ref{fig:shtnbth}).
    The structure of the problem is a quantum channel over a central system $\rho_{H_2}$ given
    in equation (\ref{eq:rhodinamics})
    \begin{equation}
        \Phi_T[\rho_Y]:=\tr_X[S(\rho_X\otimes\rho_Y)S^\dagger],
    \end{equation}
    with $S$ a unitary matrix that contains the interactions and the Hamiltonian.
    With a use of a state $\rho_X$ each time the channel acts on $\rho_Y$. 
    Expanding the state of the first Hilbert space like
    \begin{equation}
        \rho_X=\sum_ir_j\ketbra{r_j}{r_j},
    \end{equation}
    we get a Stinespring representation of the quantum channel,
    \begin{equation}
        \Phi_T[\rho_Y]=\sum_{ij}r_j^{1/2}\bra{r_i}S\ket{r_j}\rho_Yr_j^{1/2}\bra{r_j}S^\dagger\ket{r_i}.
    \end{equation}
    This channel has the correspondent Choi matrix \cite{WatrousTheTheoryof2018},
    \begin{equation}
        J(\Phi)=\sum_{ij}\text{Vec}(r_j^{1/2}\bra{r_i}S\ket{r_j})\text{Vec}(r_j^{1/2}\bra{r_j}S^\dagger\ket{r_i})^*.
        \label{eq:ChoiWtrs}
    \end{equation}
    Suppose that this system has a Hamiltonian $H$. The thermal state associated to this state is
    \begin{equation}
        \rho_c=\frac{e^{-\beta H}}{\mathcal{Z}}.
    \end{equation}
    Observe then that theorem \ref{thm:decomposition} restricts the unitary operations $S$
    that thermalize to this state into those with Choi matrix,
    \begin{equation}
        J(\Phi_T) = \mathcal{C}[\rho_c,B], 
    \end{equation}
    as defined by Eqs. (\ref{eq:ChoiWtrs}) and (\ref{eq:chandecomp}).
    The problem is now to interpret this restriction in physical terms, which is beyond the scope
    of this paper.
\subsection{The relationship with error-correction}
The initial condition (\ref{eq:ChanProy}) is the same used in quantum error 
correction \cite{QuantumErrorCTerhal2015}. Nevertheless, this approach is for mixed states 
meanwhile most approaches use pure states.
The construction of a stabilizer channel for mixed states has been already discussed, in
theorem 4.8 of \cite{WatrousTheTheoryof2018}, which states that a 
stabilizer channel that uses information that leaks into the environment to correct the state in 
question can be built
as a mixed-unitary channel. However, we present here the same problem from a different angle, as
our approach is less ambitious.
which allows us to get to the relation (\ref{eq:imprel}) for error correctability.

There are two approaches to deal with errors for quantum 
technologies: first, correct them using quantum error correction
codes (QECC), and second, avoid errors using quantum error avoiding codes
(QEAC). QECC is related to error-correcting codes for pure states, its approach
is to construct the stabilizer operator for a state. On the other hand, QEAC
is more focused on mixed states and finding strategies so that the quantum
information is preserved in decoherence-free subspaces. Our approach seems something
in the middle: it uses mixed states and a channel that presupposes
openness to an environment and asks for the preservation of a subsystem, however,
it also is based on a stabilizer operator. Notice however, that our requirement is
very mild, we only ask for the conservation of the eigenvector correspondent
to the maximum eigenvalue and the maximum eigenvalue as well. However, we would
have to extend (or restrict) the conditions to allow the final state to be
not a specific one, as we do here but a corrected state.
\section{Discussion}
We address the problem of thermalization by studying a \emph{stabilizer channel} that, given any initial state of dimension $d<\infty$ 
it yields the desired thermal state after many actions on a physical system. 

Our approach yields several questions to be answered.
First of all, there is the question of resources to build the channel. 
Therefore, how to quantify the resources to build the channel that we want, i.e.
the one that corresponds to $Z_\sigma$? This is a
question that demands the use of tools and terminology from resource 
theories. Also, to characterize the possible states $B$ to complete
the channel in theorem (\ref{thm:decomposition}), so that it becomes
a trace-preserving channel.

Second, due to the original motivation of this research, a question of interpretation
arises. Jacob et. al. \cite{ThermalizationJacob2021} proposed a ``natural'' channel for thermalizing
a system towards a given state: the scattering process. Given the negative answer to this 
natural assumption, we ask if there is another channel that does thermalize. We find there is, 
however, we do not know if this channel is natural in some way. To be more precise, the channel
studied by Jacob et. al. had a clear physical interpretation whereas the channel that we study
here has no such interpretation. Future research should be geared toward interpreting the
stabilizer channel studied here.

Finally, there is a relationship between the channel in question and quantum error correction. Specifically,
it is related to the stabilizer formalism. One of the great
arguments against quantum computation is the difficulty of taming the errors from the
environment \cite{1908.02499v1}. Further work using the formalism developed here should
address these arguments. Specifically, to have stabilizer channels that correct
errors with respect to desired symmetries.

\section*{Acknowledgements}
I thank fruitful discussions with P. Strasberg on the topic. Also discussions
with J. M. R. Parrondo and M. Esposito.
\bibliography{bibliography.bib}
\end{document}